\def\BibTeX{{\rm B\kern-.05em{\sc i\kern-.025em b}\kern-.08em
    T\kern-.1667em\lower.7ex\hbox{E}\kern-.125emX}}
\begin{document}
\IEEEoverridecommandlockouts

\title{Quantum Deception: Honey-X Deception using Quantum Games}

\author{Efstratios~Reppas$^1$,
Ali~Wadi$^2$,~\IEEEmembership{Student Member,~IEEE,}
Brendan Gould$^1$,~\IEEEmembership{Student Member,~IEEE,}\\
and~Kyriakos~G.~Vamvoudakis$^2$,~\IEEEmembership{Senior Member,~IEEE}
\thanks{$^1$E. Reppas and B. T. Gould are with the School of Electrical and Computer Engineering, Georgia Institute of Technology, Atlanta, GA, USA. Email: {\tt\small ereppas3@gatech.edu; bgould6@gatech.edu}.}
\thanks{$^2$A. Wadi and K. G. Vamvoudakis are with the Daniel Guggenheim School of Aerospace Engineering, 
Georgia Institute of Technology, Atlanta, GA, 
USA. Email:
{\tt\small awadi@gatech.edu; kyriakos@gatech.edu}.}
\thanks{This work was supported in part by NSF under grant Nos.  CPS-$2227185$ and SATC-$2231651$, and by ARO under grant No. W$911$NF-$24-1-0174$.}
}

\markboth{ }%
{ }

\maketitle

\begin{abstract}
In this paper, we develop a framework for deception in quantum games, extending the Honey-X paradigm from classical zero-sum settings into the quantum domain. Building on a view of deception in classical games as manipulation of a player’s perception of the payoff matrix, we formalize quantum deception as controlled perturbations of the payoff Hamiltonian subject to a deception budget. We show that when victims are aware of possible deception, their equilibrium strategies surprisingly coincide with those of naive victims who fully trust the deceptive Hamiltonian. This equivalence allows us to cast quantum deception as a bilevel optimization problem, which can be reformulated into a bilinear semidefinite program. To illustrate the framework, we present simulations on quantum versions of the Penny Flip game, demonstrating how quantum strategy spaces and non-classical payoffs can amplify the impact of deception relative to classical formulations. 
\end{abstract}



\section{Introduction}

Quantum game theory is a generalization of classical game theory that extends the analysis of strategic interactions into the quantum domain. 
In classical game theory, player strategies are evaluated using payoff functions that map combinations of individual strategies to real numbers~\cite{hespanha2017noncooperative}. 
Quantum game theory, on the other hand, models strategies as quantum operators that act on the state of a quantum system. 
This expansion of the strategy space---often significantly larger than in classical settings---introduces new possibilities and strategic advantages~\cite{CHEON2006147, Accardi20212, Boukas2000, Accardi2022, PhysRevA.84.012311}.

Quantum operators are distinguished by their mathematical representations. 
While classical games typically use tensors to associate strategies with payoffs, quantum games involve non-linear mappings and are often described using system-level payoff matrices with non-zero off-diagonal elements~\cite{Carlen2024, quiroga2025quantumeigengameexcitedstate, abbas2021power}. 
This feature is absent in classical formulations. 
The quantum framework not only captures scenarios where players manipulate quantum systems but also provides a unified perspective that includes classical games as a special case.

Within classical games, the structure of \emph{information} often influences both player actions and game outcomes~\cite{sobel2020signaling,gouldInformationDesignUncertainty2023, kamenicaBayesianPersuasion2011}. 
Even when restricted to truthful information, an intelligent designer can incentivize individual agents to take specific actions~\cite{bergemannInformationDesignUnified2019}, or affect outcomes for the entire population~\cite{acemogluInformationalBraessParadox2018}, simply by strategically choosing what to reveal and what to conceal. 
Given these results, it is natural to ask what becomes possible when deceptive information may also be used. 
Several types of deception have been identified~\cite{pawlickGametheoreticTaxonomySurvey2019}, with potential applications in both physical- and cyber-security scenarios~\cite{tambeSecurityGameTheory2011,nguyenImitativeAttackerDeception2019}. 

In this paper we will introduce a concept of deception in quantum games. To the best of our knowledge, our results provide the first foundation for studying adversarial information design in quantum strategic interactions. Specifically, we will investigate how the proposed deception framework from~\cite{gouldNovelFrameworkHoneyX2025} can be applied to quantum games.
Beyond the theoretical interest of introducing deception in quantum games, there are concrete cybersecurity scenarios that motivate this line of work. Modern honeypots and deception systems aim to increase attacker uncertainty and gather forensic information by presenting believable decoy services; recent work shows that generative and adaptive AI can dramatically improve honeypot realism and effectiveness \cite{pu2022honeypot}. While quantum computers pose a distinct threat to cryptographic primitives, deception is complementary: it increases an attacker’s operational cost and measurement uncertainty, and can be used to mask real assets, slow reconnaissance, and funnel attackers into monitored environments. In particular, adversarial generative inference techniques (e.g., ALI/BiGAN variants \cite{dumoulin2016ali,donahue2016bigan}) are natural tools for building adaptive decoys and realistic synthetic interactions that could be deployed inside classical or hybrid quantum-classical honeypots. We therefore view our theoretical results as underpinning a program of applied work that couples deception, post-quantum cryptographic hardening \cite{NISTPQC2022,chen2021classic}, and adaptive generative modeling to increase practical resilience against advanced (including quantum-capable) adversaries.

\subsection*{Related Work}
Early quantum game theory models (e.g., Meyer~\cite{Meyer_1999}, EWL~\cite{EWL}) are criticized for imposing artificial constraints on strategy spaces, which lack physical justification and undermine claims of quantum advantage. Zhang~\cite{Zhang2012} shows some constraints, like the inverse joint operation, are unnecessary. The field also faces broader issues like the lack of general frameworks, overemphasis on qualitative analysis, and unresolved conceptual ambiguities in quantizing classical games~\cite{FLITNEY2007381, Lee2003, UnderstandingGameTheoryV2}. These challenges are addressed in the framework proposed in~\cite{wu2005hamiltotianformalismgametheory}, which aims to resolve some of the criticisms by formulating a consistent approach to treating quantum games.

Regarding deception, in prior work it is common to use \emph{signaling games}~\cite{pawlickModelingAnalysisLeaky2019,hespanhaSensorManipulationGames2021} to study how deceptive design of messages can manipulate rational agents, even when those agents know they are being lied to. 
This type of deception, which presents false information to victims with the goal of influencing them to take actions beneficial to the deceiver, is called Honey-X due to its similarity to ``honeypot'' servers in cybersecurity~\cite{pibilGameTheoreticModel2012}. 
Recent work~\cite{gouldNovelFrameworkHoneyX2025} proposed a generalization of Honey-X deception in zero-sum, classical games that both allowed for additional types of deception and explicitly considered possible meta-rational reasoning by the victim.
To the best of our knowledge, no deception methods exist that are explicitly designed to apply to quantum games. 

\textit{Contributions:}  
The contribution of the present paper is multifold. 
First, we generalize the deceptive framework of~\cite{gouldNovelFrameworkHoneyX2025} from classical zero-sum games to the quantum domain. 
Second, we show that despite the enlarged strategy space induced by quantum moves, several key properties of the classical model remain intact. 
In particular, we prove that the security policy of a robust victim---one who accounts for deception within a bounded budget---coincides with the security policy of a naive victim who fully trusts the presented game. 
Finally, leveraging this equivalence, we formulate a mathematical program for computing optimal quantum deception and the corresponding security strategies for both the deceiver and the victim, in direct analogy to the classical construction.


\paragraph*{Structure} The remainder of this paper is structured as follows. Section~\ref{sec:problem_formulation} formulates quantum games and recalls a model of deception in classical games that is subsequently extended to the quantum formulation. Then, Section~\ref{sec:resolving} develops a convex optimization which computes the optimal deception and victim response strategies, and Section~\ref{sec:simulations} presents numerical simulations that illustrate the impact and advantage of our proposed quantum deception. Finally, Section~\ref{sec:conclusion} concludes the paper and discusses future directions.  

\paragraph*{Notation}  
The set $\mathbb{R}$ denotes the set of real numbers, while $\mathbb{N}$ denotes the set of natural numbers including zero. 
For a matrix $A$, $A^{\otimes n}$ denotes the $n^{\text{th}}$ Kronecker power of $A$. 
For a symmetric matrix $A$, $\lambda_{\min}(A)$ denotes the minimum eigenvalue of $A$. 
The matrix $I_n$ denotes the identity matrix of order $n$.  
Given a (complex) Hilbert space $\mathcal{H}$, we will use Dirac's bra-ket notation and write $\ket{\psi}\in\mathcal{H}$ for an element and $\bra{\psi}=\ket{\psi}^\dag$ for its adjoint.
We will also write $\mathbb{H}^d$ for the space of $d\times d$ Hermitian matrices, and $\mathbb{H}_+^d$ for the cone of positive semidefinite matrices in $\mathbb{H}^d$.  
We denote by $\norm{\mathbf{A}} = \sqrt{\tr\!\left(\mathbf{A}^{\dag}\mathbf{A}\right)}$ the Frobenius norm of $\mathbf{A}$ in $\mathbb{H}^d$.  
$\mathcal{U}(n)\subset \mathbb{C}^{n\times n}$ denotes the set of all $n\times n$ unitary matrices.  
$\mathcal{H}$ is a finite-dimensional complex Hilbert space equipped with the inner product $\braket{A}{B} = \tr(A^\dagger B)$. 

\section{Problem Formulation}\label{sec:problem_formulation}

First, Section~\ref{subsec:quantum_game} introduces a framework that generalizes classical matrix games to incorporate quantum strategies and quantum payoffs. 
Then, in Section~\ref{subsec:deception_model}, we establish the connection between classical and quantum settings in order to define our model of \emph{Honey-X deception} within the quantum framework.

\subsection{Quantum Games}\label{subsec:quantum_game}

A \emph{quantum game} is specified by the quadruple
\begin{equation*}
\qty{\mathcal{N}, \rho_0, \left(S_i\right)_{i\in\mathcal{N}}, \left(H_i\right)_{i\in\mathcal{N}}},
\end{equation*}
where, $\mathcal{N} = \{1, \dots, N\}$ is the set of players,
     $\mathcal{H}_i$ is the finite-dimensional Hilbert space of player $i$, $\rho_0 \in \bigotimes_i \mathcal{H}_i$ is the initial joint state,
     $S_i$ is the set of admissible quantum strategies for player $i$, with each $U_j \in S_i$ a unitary acting on $\mathcal{H}_i$, and
  $H_i$ is the Hermitian payoff operator for player $i$.
The expected payoff for player $i$ is given by
\begin{equation*}
    u_i(U_1, \dots, U_N) = 
    \tr\!\left( H_i \left(\bigotimes_{i=1}^N U_i\right)\rho_0\left(\bigotimes_{i=1}^N U_i\right)^\dagger \right).
\end{equation*}
A profile of strategies $(U_1^\star, \dots, U_N^\star)$ is a Nash equilibrium if, for every player $i$ and every $U_i \in S_i$,
\begin{equation}
    \label{eq:quantum_ne}
    u_i(U_i; U_{-i}^\star) \leq u_i(U_i^\star). 
\end{equation}


We restrict attention to the case of \emph{unentangled quantum states}. 
That is, the final state $\rho_f$ is separable and can be expressed as
\begin{equation*}
\begin{split}
    \rho_f 
    &= \left(\bigotimes_{i=1}^N U_i\right)\rho_0\left(\bigotimes_{i=1}^N U_i\right)^\dagger \\
    &= \rho_1 \otimes \rho_2 \otimes \cdots \otimes \rho_N.
\end{split}
\end{equation*}

In the two-player case, a quantum game is characterized by a quantum payoff matrix $H \in \mathbb{C}^{(n_A n_B) \times (n_A n_B)}$. 
Each player's actions are described by positive semi-definite, trace-one density matrices $\rho_A \in \mathbb{C}^{n_A \times n_A}$ and $\rho_B \in \mathbb{C}^{n_B \times n_B}$. 
The resulting payoff is~\cite{wu2005hamiltotianformalismgametheory}:
\begin{align*}
    u_{\textrm{quantum}} = \operatorname{tr}\qty((\rho_A \otimes \rho_B)H).
\end{align*}
The Hermitian constraint on $H$ guarantees real-valued payoffs, while the density matrix constraints $\rho \succeq 0$ and $\tr(\rho)=1$ ensure valid quantum states, with diagonal entries corresponding to probabilities. We denote by $\mathcal{P}=\{\rho\in\mathbb{C}^{n\times n}:\tr(\rho)=1,\rho \succeq 0\}$ the set of density matrices, so $\rho_A\in\mathcal{P}_A, \rho_B\in\mathcal{P}_B$.

Equivalently, payoffs can be defined using pure states $\ket{\psi_i} \in \mathcal{H}_i$. 
For $\ket{\psi} = \sum_i c_i \ket{\psi_i}$ with $\sum_i |c_i|^2 = 1$, the expected payoff is computed as
\begin{equation*}
    u_{\textrm{quantum}} = \mel**{\psi_A \otimes \psi_B}{H}{\psi_A \otimes \psi_B}.
\end{equation*}
This formulation highlights the key advantage of quantum games: superposition and entanglement enlarge the strategic space beyond classical analogues, potentially leading to strictly improved payoffs.

To construct payoff operators, one may use the Hamiltonian formalism~\cite{Lee2003, wu2005hamiltotianformalismgametheory}. 
Specifically, let $H(i,j)$ denote the $(i,j)$-th entry of $H$, defined as
\begin{equation}\label{eq:Hqm}
\begin{split}
    H(i,j) &= \mel{\mu_1 \mu_2}{H}{\nu_1 \nu_2} \\
           &= \tr\!\left( P(\nu_1 \nu_2)\,\rho_0\,(\mu_1 \mu_2)^\dagger \right),
\end{split}
\end{equation}
where $\mu_i, \nu_i \in \mathcal{U}(n_i)$ are unitary strategies of player $i$, and $P$ is a suitable payoff operator. 
For an $n_1 \times n_2$ matrix game, we have $|S_i| = n_i^2$, $P \in \mathbb{R}^{n_1^2 \times n_2^2}$, and $H \in \mathbb{C}^{n_1^2 \times n_2^2}$.

\paragraph*{Example} 
For a classical $2 \times 2$ matrix game, the Pauli basis
\begin{equation}\label{eq:pauli}
    \begin{split}
    I &= \begin{bmatrix} 1 & 0 \\ 0 & 1 \end{bmatrix}, \quad
    X = \begin{bmatrix} 0 & 1 \\ 1 & 0 \end{bmatrix}, \\
    Y &= \begin{bmatrix} 0 & -i \\ i & 0 \end{bmatrix}, \quad
    Z = \begin{bmatrix} 1 & 0 \\ 0 & -1 \end{bmatrix},
    \end{split}
\end{equation}
spans $\mathcal{U}(2)$ and yields a quantum payoff Hamiltonian $H \in \mathbb{C}^{16 \times 16}$. \hfill\(\square\)

\begin{remark}
    If the strategy set $S_i$ corresponds to the complete basis of $\mathcal{U}(n_i)$, then the number of admissible strategies for player $i$ is $n_i^2$. \hfill\(\square\)
\end{remark}

\begin{remark}
    A classical $n$-strategy game typically lifts to a quantum game of size $O(n^2)$, reflecting the enlarged strategy space in the quantum domain. \hfill\(\square\)
\end{remark}

\begin{remark}
    The density matrices in~\eqref{eq:Hqm} represent distributions over the \emph{quantum strategy space}, not over the game payoffs themselves. \hfill\(\square\)
\end{remark}

\subsection{Honey-X Deception}\label{subsec:deception_model}

We now turn to deception. 
In the classical setting, a two-player zero-sum game is parameterized by $G \in \mathbb{R}^{m \times n}$, with mixed strategies $x \in \Delta(m)$ and $y \in \Delta(n)$, where $\Delta(\cdot)$ denotes the probability simplex of appropriate dimension, yielding payoff $x^\top G y$. 
Deception, as introduced in~\cite{gouldNovelFrameworkHoneyX2025}, allows the row player not only to select $x$ but also to perturb the column player’s perceived payoffs via $\tilde{G} = G + D$, subject to constraints on $D$, referred to as the deceptive payoff. 
The column player best-responds to $\tilde{G}$, but the actual payoffs are determined by $G$.

In the quantum setting, the analogue of this deception mechanism is to perturb the payoff Hamiltonian. 
That is, instead of $H$, the deceiver announces
\begin{equation*}
    H' = H + D,
\end{equation*}
with $D \in \mathbb{C}^{(n_A  n_B) \times (n_A  n_B)}$ Hermitian to preserve interpretability. 
As in the classical case, the deceiver’s manipulation is limited by a deception budget.

\begin{assumption}[Deception Budget]\label{asmp:budget}
    Let $\Delta \in \mathbb{R}_+$ be the deception budget. 
    We assume that $\lVert D \rVert_1 \leq \Delta$, where $\lVert \cdot \rVert_1$ denotes the matrix norm induced by the vector $1$-norm $\norm{A}_1 = \sup_{\lVert x \rVert_1 = 1} \lVert A x \rVert_1$. We denote the corresponding set that $D$ belongs to as $\mathcal{D}\coloneq \{\mathbb{C}^{(n_A  n_B) \times (n_A  n_B)}:\lVert D \rVert_1 \leq \Delta\}$.\hfill\(\square\)
\end{assumption}

\begin{assumption}[Victim Response]\label{asmp:victim_response}
    The victim observes only the perturbed Hamiltonian $H'$ and selects a Nash equilibrium strategy (as defined in~\eqref{eq:quantum_ne}) for $H'$. \hfill\(\square\)
\end{assumption}

This ``naive victim'' model, in which the victim fully trusts $H'$, might seem restrictive. 
However, as shown in the following sections, it aligns with classical deception models and naturally extends to quantum games. 
Figure~\ref{fig:timeline} illustrates the quantum deception game framework.

\begin{figure}[!t]
    \centering
    \includegraphics[width=\linewidth]{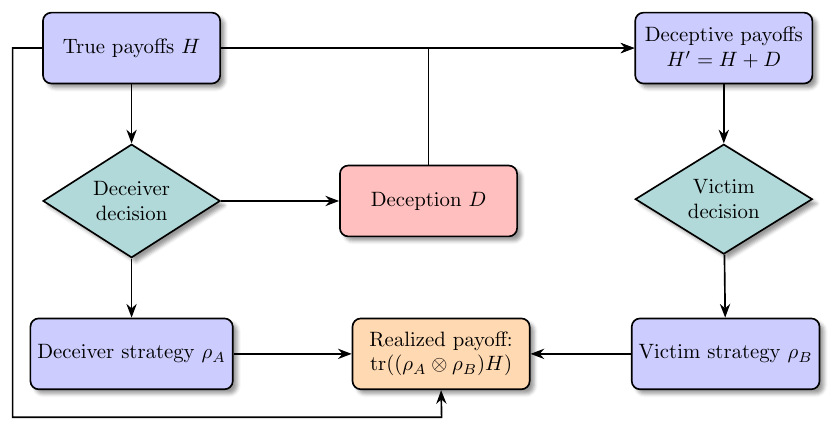}
    \caption{The schematic of the quantum deception game. Rectangles denote numerical values, while parallelograms represent strategic decisions. Arrows illustrate causal relationships, indicating that the origin influences the destination.}
    \label{fig:timeline}
\end{figure}

\section{Resolving Quantum Deception}\label{sec:resolving}

In order to begin deriving quantum deception, the first nuance we must address is how to extend the notion of best response from classical to quantum games. 

\begin{lemma}
For the quantum game with the Hermitian payoff operator $H \in \mathcal{H}_A \otimes \mathcal{H}_B$, player density operators 
$\rho_A \in \mathcal{P}_A$ and 
$\rho_B \in \mathcal{P}_B$, and payoff $
    u(\rho_A,\rho_B) = \operatorname{tr}\qty((\rho_A \otimes\rho_B)H)$, the best response is given by, if player $A$ is the minimizing player:
        \begin{equation}
        \begin{aligned}
           \max_{u \in \mathbb{R}, \, \rho_B \in \mathcal{P}_B} \quad & u \\
           \text{s.t.} \quad & \operatorname{tr}_B \left( (I_A \otimes \rho_B) H \right) \succeq u I_A\\
           \min_{u \in \mathbb{R},\, \rho_A \in \mathcal{P}_A} \quad & u \\
           \text{s.t.} \quad & \operatorname{tr}_A \left( (\rho_A \otimes I_B) H \right) \preceq u I_B.
            \label{eq: quantum best response}
        \end{aligned}
    \end{equation}
\end{lemma}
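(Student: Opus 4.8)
The plan is to read the two programs in~\eqref{eq: quantum best response} as the security (max--min and min--max) value computations for the maximizing player $B$ and the minimizing player $A$, respectively, and to show that each collapses the bilinear payoff $u(\rho_A,\rho_B)=\operatorname{tr}((\rho_A\otimes\rho_B)H)$ into a single semidefinite program by combining partial traces with the variational characterization of extreme eigenvalues.

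First I would establish the linearization identity
\[
u(\rho_A,\rho_B)=\operatorname{tr}_A\!\bigl(\rho_A\,\operatorname{tr}_B((I_A\otimes\rho_B)H)\bigr)=\operatorname{tr}_B\!\bigl(\rho_B\,\operatorname{tr}_A((\rho_A\otimes I_B)H)\bigr).
\]
Writing $H=\sum_k A_k\otimes B_k$ with $A_k,B_k$ Hermitian, both equalities reduce to $\sum_k\operatorname{tr}(\rho_A A_k)\operatorname{tr}(\rho_B B_k)$ by linearity of the (partial) trace. The same decomposition shows that the reduced operators $M_A(\rho_B)\coloneq\operatorname{tr}_B((I_A\otimes\rho_B)H)$ and $M_B(\rho_A)\coloneq\operatorname{tr}_A((\rho_A\otimes I_B)H)$ are Hermitian, since the scalar coefficients $\operatorname{tr}(\rho_B B_k)$ and $\operatorname{tr}(\rho_A A_k)$ are real.

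Next I fix one player's density operator and optimize over the other. For fixed $\rho_B$, minimizing the now-linear functional $\rho_A\mapsto\operatorname{tr}_A(\rho_A M_A(\rho_B))$ over $\mathcal{P}_A$ attains $\lambda_{\min}(M_A(\rho_B))$, because for Hermitian $M$ the quantity $\operatorname{tr}(\rho M)$ is a convex combination of the eigenvalues of $M$ (the weights being the diagonal of $\rho$ in the eigenbasis), so its minimum over density matrices equals $\lambda_{\min}(M)$, achieved by the projector onto the lowest eigenvector; dually, $\max_{\rho_B\in\mathcal{P}_B}\operatorname{tr}_B(\rho_B M_B(\rho_A))=\lambda_{\max}(M_B(\rho_A))$. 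Hence the maximizer's security value is $\max_{\rho_B}\lambda_{\min}(M_A(\rho_B))$ and the minimizer's is $\min_{\rho_A}\lambda_{\max}(M_B(\rho_A))$.

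Finally I would lift these to joint programs through the epigraph identities $\lambda_{\min}(M)=\max\{u:M\succeq uI\}$ and $\lambda_{\max}(M)=\min\{u:M\preceq uI\}$. Substituting and optimizing jointly over $(u,\rho_B)$ and $(u,\rho_A)$ reproduces exactly the two programs in~\eqref{eq: quantum best response}, whose constraints are linear matrix inequalities in the decision variables and are therefore genuine SDPs. The main obstacle is not any single step but keeping the bookkeeping airtight: verifying Hermiticity of $M_A,M_B$ so the eigenvalue characterization applies, maintaining consistent partial-trace conventions so the identity inserts the identity on the correct factor, and---if one wants the two programs to certify a common game value---observing that they form a primal--dual SDP pair, whence strong duality (equivalently, the minimax theorem for the bilinear payoff over the convex compact sets $\mathcal{P}_A,\mathcal{P}_B$) guarantees that their optimal values coincide.
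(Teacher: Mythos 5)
Your proposal is correct and follows essentially the same route as the paper: linearize the payoff via the partial trace, reduce each inner optimization over a density matrix to an extreme eigenvalue of the Hermitian reduced operator, and rewrite $\max_{\rho_B}\lambda_{\min}(\cdot)$ and $\min_{\rho_A}\lambda_{\max}(\cdot)$ as SDPs through the epigraph characterizations $\lambda_{\min}(M)=\max\{u:M\succeq uI\}$ and $\lambda_{\max}(M)=\min\{u:M\preceq uI\}$. If anything, your bookkeeping is tidier than the paper's, which swaps $\lambda_{\max}$ and $\lambda_{\min}$ in its intermediate minimax/maximin display (though its final SDPs match yours), and you additionally make explicit the Hermiticity check and the strong-duality remark that the paper leaves implicit.
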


\begin{proof}
    Using the properties of the partial trace
    \begin{equation*}
    \begin{split}
    u(\rho_A,\rho_B) &= \operatorname{tr}\qty((\rho_A \otimes\rho_B)H)\\
    &= \operatorname{tr}_B\qty(\rho_B \operatorname{tr}_A(\rho_A \otimes I_B)H)\\
    &= \operatorname{tr}_A\qty(\rho_A \operatorname{tr}_B(I_A \otimes\rho_B)H)
    \end{split}
    \end{equation*}
    where the operators $\mathcal{K}_B(\rho_A)=\operatorname{tr}_A(\rho_A \otimes I_B)H$ and $\mathcal{K}_A(\rho_B)=\operatorname{tr}_B(I_A \otimes\rho_B)H$ are independent from $\rho_B$ and $\rho_A$, respectively.
    This allows us to write for a fixed $\rho_A$
    \begin{equation*}
        \max_{\rho_B \in \mathcal{P_B}} \operatorname{tr}_B\qty(\rho_B \mathcal{K}_B(\rho_A)) = \lambda_{\max} (\mathcal{K}_B(\rho_A)),
    \end{equation*}
    and for a fixed $\rho_B$
    \begin{equation*}
        \min_{\rho_A \in \mathcal{P}_A} \operatorname{tr}_B\qty(\rho_A \mathcal{K}_A(\rho_B)) = \lambda_{\min} (\mathcal{K}_A(\rho_B)).
    \end{equation*}
    Finally, we can write the minimax and the maximin problems as
    \begin{equation*}
    \begin{split}
        \max_{\rho_B \in \mathcal{P}_B}&  \lambda_{\max} (\operatorname{tr}_B(I_A \otimes \rho_B)H),\\
        \min_{\rho_A \in \mathcal{P}_A}&  \lambda_{\min} (\operatorname{tr}_A(\rho_A \otimes I_B)H)
    \end{split}
    \end{equation*}
    or as the pair of SDPs 
    \begin{equation*}
        \begin{aligned}
           \max_{u \in \mathbb{R}, \, \rho_B \in \mathcal{P}_B} \quad & u \\
           \text{s.t.} \quad & \operatorname{tr}_B \left( (I_A \otimes \rho_B) H \right) \succeq u I_A\\
           \min_{u \in \mathbb{R}, \, \rho_A \in \mathcal{P}_A} \quad & u \\
           \text{s.t.} \quad & \operatorname{tr}_A \left( (\rho_A \otimes I_B) H \right) \preceq u I_B,
        \end{aligned}
    \end{equation*}
    which establishes the required result. \frqed
\end{proof}

Having established the notion of a best response in a game without deception, one naturally asks how a victim should act in a potentially deceptive game. In this context, the concept of a best response becomes less straightforward. While the victim may identify a strategy that guarantees a minimum payoff for the deceptive game specified by $H'$, there is no assurance that this strategy will perform effectively in the actual game $H$. Indeed, the victim's response could span a wide range of behaviors, contingent on their level of awareness and the accuracy of their estimates regarding the deception they face.




We consider two such possible ways for a victim to choose behaviors in a quantum game with deception.

\begin{definition}
A ``naive'' victim assumes that $H'$ is the true payoff Hamiltonian and chooses Nash equilibrium actions for the game it defines. 
That is, selects a 
\begin{equation*}
    (\rho_B)_n \in \Phi(H),
\end{equation*}
where $\Phi(H)$ is defined as the set of density operators that solve~\eqref{eq: quantum best response}.
\hfill\(\square\)
\end{definition}

\begin{definition}
A ``robust'' victim is aware of the possibility of deception, and selects a strategy to optimize the worst-case outcome consistent with the announced payoff Hamiltonian: 
\begin{align*}
    (\rho_B)_r &\in \Phi_r(H') := \Phi(H'-\hat{D}), \\
    \text{s.t.}\, & (\hat{\rho}_A, \hat{D}) \in \argmin_{\rho_A \in \mathcal{P}_A, D \in \mathcal{D}} \operatorname{tr} ((\rho_A \otimes \rho_B) (H' - D)).
\end{align*}
\hfill\(\square\)
\end{definition}

Although each of these models, considered in isolation, may appear to impose a restrictive structure on the victim’s decision-making process, Theorem~\ref{thm:robust_victim} shows that they are behaviorally equivalent. This equivalence establishes that their predictive content is robust to heterogeneity in victim rationality, thereby offering a rigorous justification for Assumption~\ref{asmp:victim_response}.


\begin{theorem}
\label{thm:robust_victim}
Let $H \in \mathbb{C}^{(n_A n_B)\times(n_A n_B)}$ be a payoff Hamiltonian.
Then, one has $\Phi(H)=\Phi_r(H)$.
\end{theorem}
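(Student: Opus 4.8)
The plan is to reduce both victim models to maximizing the \emph{same} value function, which differ only by an additive constant, so that their optimal sets coincide. The first step is to reinterpret $\Phi(H)$ using the preceding best-response Lemma: the defining SDP is exactly $\operatorname*{arg\,max}_{\rho_B\in\mathcal P_B} V_H(\rho_B)$, where
\[
V_H(\rho_B) := \min_{\rho_A\in\mathcal P_A}\operatorname{tr}\!\big((\rho_A\otimes\rho_B)H\big) = \lambda_{\min}\!\big(\operatorname{tr}_B((I_A\otimes\rho_B)H)\big)
\]
is the guaranteed (worst-case over $\rho_A$) payoff of the maximizing victim. Thus the naive victim simply maximizes $V_H$.

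The crux is a lemma stating that the worst-case deception value is a strategy-independent constant:
\[
\max_{D\in\mathcal D}\ \operatorname{tr}(\sigma D) = \Delta \quad\text{for every density matrix } \sigma .
\]
For the upper bound, diagonalizing Hermitian $D=\sum_k\lambda_k(D)\,|e_k\rangle\langle e_k|$ gives $\operatorname{tr}(\sigma D)=\sum_k\lambda_k(D)\,\langle e_k|\sigma|e_k\rangle\le\lambda_{\min}(\;\cdot\;)$... more precisely $\le \lambda_{\max}(D)$, since the weights $\langle e_k|\sigma|e_k\rangle$ are nonnegative and sum to $\operatorname{tr}(\sigma)=1$; then $\lambda_{\max}(D)\le\max_k|\lambda_k(D)|\le\lVert D\rVert_1\le\Delta$, using that the spectral radius is dominated by any induced matrix norm. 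Achievability is immediate from $D=\Delta I$, which is Hermitian, satisfies $\lVert\Delta I\rVert_1=\Delta$, and yields $\operatorname{tr}(\sigma\,\Delta I)=\Delta\operatorname{tr}(\sigma)=\Delta$. The decisive feature for what follows is that this optimal value $\Delta$ does not depend on $\sigma$, and in particular not on $\rho_A$.

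I would then apply this with $\sigma=\rho_A\otimes\rho_B$, which is a density matrix since $\rho_A,\rho_B$ are. Because $\max_D\operatorname{tr}((\rho_A\otimes\rho_B)D)=\Delta$ uniformly in $\rho_A$, it factors cleanly out of the inner minimization:
\[
\min_{\rho_A\in\mathcal P_A,\,D\in\mathcal D}\operatorname{tr}\!\big((\rho_A\otimes\rho_B)(H-D)\big) = \min_{\rho_A\in\mathcal P_A}\operatorname{tr}\!\big((\rho_A\otimes\rho_B)H\big)-\Delta = V_H(\rho_B)-\Delta .
\]
Hence the robust victim maximizes $V_H(\rho_B)-\Delta$, which has identical maximizers to $V_H(\rho_B)$. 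Equivalently, one may take the worst-case deception to be $\hat D=\Delta I$, so that $H-\hat D=H-\Delta I$ shifts the entire maximin value by the constant $-\Delta$; thus $\Phi(H-\Delta I)=\Phi(H)$ and the consistency condition defining $\Phi_r$ is satisfied precisely on $\Phi(H)$. This gives $\Phi_r(H)=\Phi(H)$, establishing both inclusions simultaneously.

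I expect the main obstacle to be the lemma, specifically certifying that the worst-case deception value is \emph{exactly} $\Delta$ and, crucially, independent of the players' states; the constant-shift argument and the reinterpretation of $\Phi(H)$ are then routine. A secondary point needing care is the mild circularity in the robust model, where $\hat D$ depends on $\rho_B$ while $\rho_B$ must be optimal against $H-\hat D$; the fact that $\hat D=\Delta I$ is simultaneously optimal for \emph{every} $\rho_B$ dissolves this fixed-point issue.
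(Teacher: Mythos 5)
Your proposal is correct and follows essentially the same route as the paper's proof: both hinge on showing $\max_{D\in\mathcal D}\operatorname{tr}(\sigma D)=\Delta$ independently of the state (via $\lambda_{\max}(D)\le\lVert D\rVert_1\le\Delta$ with the witness $D=\Delta I$), and then conclude that the robust and naive objectives differ by the constant $\Delta$. Your version is marginally tidier in two respects---you argue with general density matrices rather than the paper's pure-state Rayleigh-quotient form, and you explicitly dissolve the fixed-point circularity in the definition of $\Phi_r$ by noting that $\hat D=\Delta I$ is optimal uniformly in $\rho_B$---but these are refinements of the same argument, not a different one.
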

\begin{proof}
    We begin by writing the expected payoff in the quantum game as a function of the quantum states:
    \begin{dmath*}
        u_{\textrm{quantum}} = \mel**{\psi_A \otimes \psi_B} {H} {\psi_A \otimes \psi_B} = \mel**{\psi_A \otimes \psi_B} {(H'-D)} {\psi_A \otimes \psi_B}.
    \end{dmath*}


From the perspective of the victim player, which is the maximizing one, their best response can be defined in the robust response as:
\begin{dmath}
    \Phi_r(H) = \argmax_{\psi_B \in \mathcal{H}_B} \min_{\psi_A \in \mathcal{H}_A, D \in \mathcal{D}}  \mel**{\psi_A \otimes \psi_B} {(H'-D)} {\psi_A \otimes \psi_B} = \argmax_{\psi_B \in \mathcal{H}_B} \left( \min_{\psi_A \in \mathcal{H}_A} \mel**{\psi_A \otimes \psi_B} {H'} {\psi_A \otimes \psi_B} \\ - \max_{D \in \mathcal{D}} \mel**{\psi_A \otimes \psi_B} {D} {\psi_A \otimes \psi_B} \right).
\end{dmath}

For the $\max_{D\in \mathcal{D}} \mel**{\psi_A \otimes \psi_B} {D} {\psi_A \otimes \psi_B}$ term we will show that this is equal to a constant value. $\ket{\psi_A \otimes\psi_B}$ is a unit vector, since it is a Kronecker product of quantum states, so $\norm{{\psi_1}}=\norm{{\psi_2}}=1$ and $\norm{\psi_A \otimes \psi_B}= \norm{{\psi_1}} \norm{{\psi_2}}=1$. 
For unit vectors, and since $D$ is Hermitian, this has the form of the Rayleigh quotient, for which it holds:
\begin{equation*}
    \lambda_{\min}(D)\leq \mel**{\psi_A \otimes \psi_B} {D} {\psi_A \otimes \psi_B} \leq \lambda_{\max}(D)
\end{equation*}
where $\lambda_{\max}$ is the maximum eigenvalue of $D$, or spectral radius of $D$, and $\lambda_{\min}$ is the minimum eigenvalue of $D$. It is also true that $\lambda_{\max}(D) \leq ||D||_p$ for any matrix norm induced by a vector $p$-norm. Therefore, since we assume $||D||_1 \leq \Delta$, we have that $\lambda_{\max}(D) \leq \Delta$. The equality with $\Delta$ can hold for both the upper and lower bounds and for every unit vector. For instance consider $D=\Delta I$, where $\lambda_{\min}=\lambda_{\max}=\Delta$. Therefore, we have the bound $\mel**{\psi_A \otimes \psi_B} {D} {\psi_A \otimes \psi_B} \leq \Delta$ with equality being achieved for appropriate values of $D$, and so we have that:


\begin{equation*}
    \max_{D \in \mathcal{D}}\; \mel**{\psi_A \otimes \psi_B} {D} {\psi_A \otimes \psi_B} = \Delta.
\end{equation*}

Hence we have:

\begin{dmath}
    \Phi_r(H)=\argmax_{\psi_A \in \mathcal{H}_A} \min_{\psi_B \in \mathcal{H}_B} \mel**{\psi_A \otimes \psi_B} {H'} {\psi_A \otimes \psi_B} - \Delta = \argmax \min  \mel**{\psi_A \otimes \psi_B} {H'} {\psi_A \otimes \psi_B} = \Phi(H) 
\end{dmath}

The final equation holds since the inner objectives differ only by a constant. Another way to see that is that this is equivalent with shifting the expected payoffs of a game with a constant, which doesn't affect the strategy derivation for the players.

This completes the proof. \frqed








\end{proof}

This means that a ``robust" victim that is aware of the possibility it is being deceived cannot guarantee better payoffs than a ``naive" victim that is unaware of the deception. This allows for a great simplification of the problem as explained, as now we can derive a program where the leader chooses a deception and a strategy that minimizes their cost assuming that the victim best responds to deceptive quantum game $H'$. An important sidenote is also that this proof works for any induced $p$-norm of $D$, beyond the 1-norm we consider in the context of this paper.

To find the optimal deception in classical games, \cite{gouldNovelFrameworkHoneyX2025} solved the program:
\begin{equation}\label{eq:classical deception}
\begin{split}
    (x_i, D,y_i) \in &\argmin_{x\in\Delta(m), D \in\mathcal{D}, y\in\Delta(n)} v_G(x,y) \\
    &\hspace{1.5cm} \textrm{s.t.} \quad y \in \Phi_B(G+D)
\end{split}
\end{equation}
where $\Phi_B(G)$ denotes a security policy of player $B$ in a game with payoff matrix $G$, where in classical games $v_G(x,y)=x^\top Gy$. We consider a similar problem for quantum games, where the deceiver wants to optimize their payoff given that the victim naively best-responds to the deceptive game $H'$\footnote{Since the victim's best response may not be unique, we make the optimistic assumption that the one most beneficial for the deceiver is selected. This corresponds to the solution concept of a strong Stackelberg equilibrium, and is standard in the literature~\cite{tambeSecurityGameTheory2011, conitzerComputingOptimalStrategy2006}.}. This follows the essence of the deception framework proposed in~\cite{gouldNovelFrameworkHoneyX2025}. In quantum games, considering separability of the quantum states (meaning not fully entangled states) we have that the payoff is given by $\operatorname{tr}\left({(\rho_a \otimes \rho_b)H}\right)$, as explained in Section 2. 
Therefore, the quantum equivalent of \eqref{eq:classical deception} is:
\begin{dmath}
    (\rho_A, D, \rho_B) \in \argmin_{\substack{\rho_A \in \mathcal{P}_A, \\ \rho_B \in \mathcal{P}_B, \\ D\in \mathcal{D}}} \operatorname{tr}\left({(\rho_A \otimes \rho_B)H}\right) \\ \textrm{s.t.} \ {\rho_B \in \Phi_B(H+D)}.
    \label{eq: quantum deception}
\end{dmath}

The problem presented in~\eqref{eq: quantum deception} can be reformulated to the single-level optimization problem~\eqref{main program}, allowing for better computational tractability.

\begin{theorem}

Let $H \in \mathcal{H}_A \otimes\mathcal{H}_b$ be the payoff Hamiltonian of a zero-sum quantum game, $\rho_A \in \mathcal{P}_A \subseteq \mathbb{C}^{n_A \times n_A}, \rho_B \in \mathcal{P}_B\subseteq \mathbb{C}^{n_B \times n_B}$ density matrices for player A and player B of this game respectively, $D\in\mathcal{D}$ the deceptive payoff applied on $H$ by player A and $\Delta \in \mathbb{R_+}$ the deception budget. Then, $\rho_A^\star, D^\star,\rho_B^\star$ that belong in an optimal solution $(\rho_A^\star, D^\star,\rho_B^\star,\Omega^\star,u^\star)$ of the following mathematical program constitute an optimal solution ($\rho_A^\star, D^\star,\rho_B^\star$) of the program in equation \eqref{eq:classical deception}:

\begin{subequations}\label{main program}
\begin{align}
    \min_{\substack{
        \rho_A, \Omega \in \mathbb{C}^{n_A \times n_A}, \\
        \rho_B \in \mathbb{C}^{n_B \times n_B}, \\
        D \in \mathbb{C}^{n_An_B \times n_An_B}, \\
        v \in \mathbb{R}
    }}
    & \operatorname{tr}\left( (\rho_A \otimes \rho_B) H \right) \tag{\theparentequation} \\
    \text{s.t.} \quad 
    & \operatorname{tr}_B \left( (I_A \otimes \rho_B)(H+D) \right) \succeq u I_A \label{eq:1a} \\
    & \operatorname{tr}_A \left( (\Omega \otimes I_B)(H + D) \right) \preceq u I_B \label{eq:1b} \\
    & \|D\|_1 \leq \Delta \label{eq:1c} \\
    & \rho_A, \rho_B, \Omega \succeq 0 \label{eq:1d} \\
    & \operatorname{tr}(\rho_A)=\operatorname{tr}(\rho_B)=\operatorname{tr}(\Omega)=1.\label{eq:1e}
\end{align}
\label{eq: main}
\end{subequations}

\end{theorem}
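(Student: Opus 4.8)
The plan is to show that the single-level program~\eqref{main program} is an exact reformulation of the bilevel deception problem~\eqref{eq: quantum deception}, by proving that the auxiliary variables $u$ and $\Omega$ together encode precisely the lower-level optimality condition $\rho_B \in \Phi_B(H+D)$, while leaving $\rho_A$ and $D$ free to be chosen by the deceiver. Since both programs share the identical objective $\operatorname{tr}((\rho_A\otimes\rho_B)H)$, it suffices to prove that the projection of the feasible set of~\eqref{main program} onto the coordinates $(\rho_A, D, \rho_B)$ equals the feasible set of~\eqref{eq: quantum deception}.

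First, I would recall the best-response characterization established earlier: for the perturbed game $H+D$, the victim's security set $\Phi_B(H+D)$ consists of those $\rho_B \in \mathcal{P}_B$ that solve the maximizing SDP, whose optimal value is the max-min value $v_B(H+D)$, while the companion minimizing SDP has optimal value the min-max value $v_A(H+D)$. Because the payoff $\operatorname{tr}((\rho_A\otimes\rho_B)(H+D))$ is bilinear in $(\rho_A,\rho_B)$ and the sets $\mathcal{P}_A,\mathcal{P}_B$ are convex and compact, the von Neumann minimax theorem (equivalently, SDP strong duality, which holds here since strictly feasible points such as the maximally mixed states exist) yields $v_B(H+D)=v_A(H+D)=:v^\star$.

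Next, I would prove the feasible-set equivalence in two directions. For the forward direction, suppose $\rho_B\in\Phi_B(H+D)$; then setting $u=v^\star$ makes~\eqref{eq:1a} hold by optimality of $\rho_B$ in the max SDP, and taking $\Omega$ to be any minimizer of the companion min SDP makes~\eqref{eq:1b} hold with the same $u=v^\star$, so the tuple is feasible for~\eqref{main program} for every $\rho_A\in\mathcal{P}_A$. For the converse, suppose~\eqref{eq:1a}--\eqref{eq:1e} hold for some $u,\Omega$. Constraint~\eqref{eq:1a} certifies that $\rho_B$ guarantees the maximizer a payoff of at least $u$, hence $v_B(H+D)\ge u$; constraint~\eqref{eq:1b} certifies that $\Omega$ guarantees the minimizer a payoff of at most $u$, hence $v_A(H+D)\le u$. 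Combining with $v_B=v_A=v^\star$ forces $u=v^\star$ and, in turn, that $\rho_B$ attains the max-min value, i.e. $\rho_B\in\Phi_B(H+D)$. Since $\rho_A$ enters neither constraint, it ranges freely over $\mathcal{P}_A$ and $D$ over $\mathcal{D}$, exactly as in~\eqref{eq: quantum deception}.

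I expect the main obstacle to be the converse direction, specifically the sandwiching argument: one must invoke strong duality to conclude that the two one-sided certificates~\eqref{eq:1a} and~\eqref{eq:1b}, sharing a single scalar $u$, are not merely necessary conditions but actually pin $\rho_B$ to an exact maximizer of the lower-level problem. Care is also needed to keep the roles of the variables distinct---$\Omega$ is a dual certificate for the victim's optimality rather than the deceiver's played strategy $\rho_A$---and to confirm that $u$ is determined by the constraints rather than independently optimized, so that the reformulation neither relaxes nor tightens the lower-level optimality requirement. Finally, equality of the two objectives together with coincidence of the projected feasible sets immediately gives the correspondence of optimal solutions claimed in the theorem.
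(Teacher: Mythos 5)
Your proposal is correct and follows essentially the same route as the paper: encode the victim's lower-level optimality by pairing the primal feasibility certificate \eqref{eq:1a} with the dual certificate \eqref{eq:1b} through a shared value variable, and invoke the von Neumann minimax theorem (strong duality for the quantum zero-sum game) to conclude the two certificates pin $\rho_B$ to an exact best response. Your two-direction feasible-set projection argument simply makes explicit the sandwiching step that the paper's proof leaves implicit when it ``combines'' the primal and dual SDPs into the single program.
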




\begin{proof}
    We begin our proof from equation \eqref{eq: quantum deception}. To solve this we will use the definitions of the security policies in the quantum games. As seen in equation \eqref{eq: quantum best response}, this problem translates to:

\begin{equation*}
\begin{aligned}
    \max_{\rho_A, u} &\quad u\\
    \textrm{s.t.} &\quad \operatorname{tr}_B((I_A \otimes \rho_B)(H+D)) \succeq u I_A, \\
    & \rho_A \succeq 0, \operatorname{tr}(\rho_A)=1. 
\label{second level optimization}
\end{aligned}
\end{equation*}

This, therefore, results in our bi-level optimization program. To avoid this structure, we convert it to bilinear using duality. More specifically, Von Neumann's minimax theorem applies to quantum zero-sum games as well, both in finite \cite{Boukas2000} as well as infinite dimensions \cite{AccardiBoukas2020}. This means that the dual of this problem is the best response of the maximizing player, written as:
\begin{equation*}
\begin{aligned}
   \min_{v, \rho_B, \Omega} \quad & v \\
    \text{s.t.} \quad & \operatorname{tr}_A \left( (\Omega \otimes I_B)(H + D) \right) \preceq vI_B \\
    & \Omega \succeq 0, \operatorname{tr}(\Omega)=1.
\end{aligned}
\end{equation*}

In this formulation, the dual variable $\Omega$ would correspond to the density matrix of player A best responding to the deceptive game $H' = H+D$. Because of the Minimax theorem, the optimal values for the two problems coincide as strong duality holds, meaning $u^\star=v^\star$ and therefore we can combine them both into the single optimization formulation in equation \eqref{main program} to derive the final program. Variable $u$ therefore would represent the calculated expected payoff in equilibrium for the deceptive game $H'$. 

This completes the proof. \frqed


\end{proof}

The solution of the program of equation \eqref{eq: main} would correspond to the optimal deception $D$ of our framework, along with the best response strategy $\rho_A$. It is a bilinear semi-definite program; therefore, it is non-convex, making it quite hard to solve for large games. Furthermore, the solution domain resides in the complex space, which further complicates the problem. However, for small quantum games, this is not necessarily a problem as will be shown in the next section, although the complexity might become prohibitive for larger ones. In the following section, we will present the results of a series of simulations we performed. 
\section{Simulations}\label{sec:simulations}

\begin{figure*}[t]
    \centering
    \begin{subfigure}{0.45\textwidth}
        \centering
        \includegraphics[width=\linewidth]{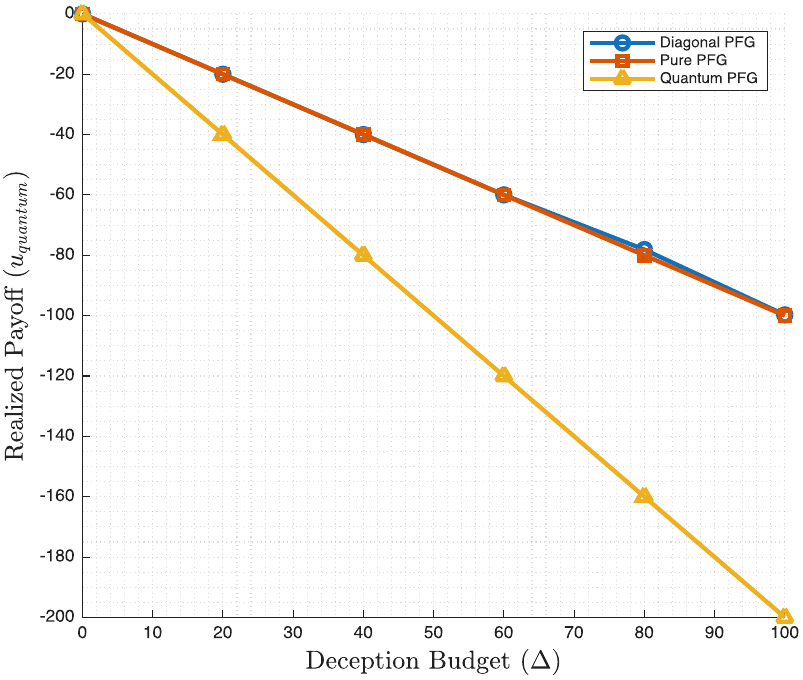}
        \caption{}
        \label{fig:sub-a}
    \end{subfigure}
    \hfill
    \begin{subfigure}{0.45\textwidth}
        \centering
        \includegraphics[width=\linewidth]{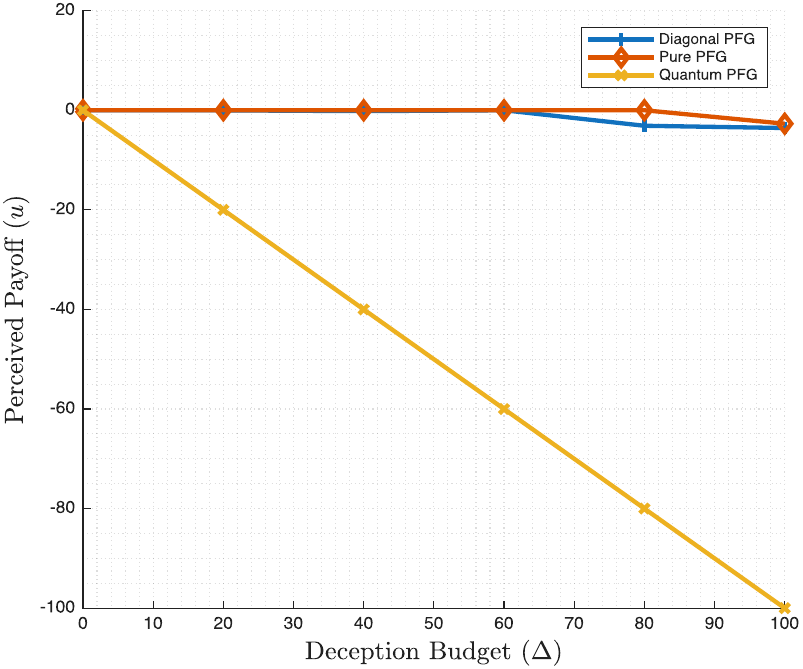}
        \caption{}
        \label{fig:sub-b}
    \end{subfigure}

\caption{Dependence of the deception cost $\Delta$ on \textbf{(a)} the realized payoff of the initial game $H$ from the perspective of the deceiver/minimizer and \textbf{(b)} the perceived payoff $u$ of the deceptive game $H'$ from the perspective of the victim/maximizer, for each of the three games: ``pure", ``diagonal", and ``quantum" PFG. The payoff in \textbf{(a)} decreases (increases in absolute value) approximately linearly with increasing $\Delta$, saturating at the maximum possible absolute value of $100$ for large deception budgets in the first two games, and $200$ for the ``quantum" PFG (the maximum possible absolute value would be $400$ with sufficient deception budget). The optimal perceived payoff $u$ in \textbf{(b)} grows linearly with $\Delta$ for the ``quantum" PFG, but remains close to $0$ for the other games before exhibiting slight fluctuations at larger values.}

 \label{fig:graph}
\end{figure*}

\newcommand{\MatrixCell}[1]{%
  \rule{0pt}{10.0ex}
      $\vcenter{\hbox{%
    \begingroup
      \setlength{\arraycolsep}{2pt}
      \renewcommand{\arraystretch}{1.0}%
      $\begin{bmatrix}#1\end{bmatrix}$%
    \endgroup
  }}$%

  \rule[-8ex]{0pt}{0ex}
}

We implemented the proposed quantum deception framework in \textsc{MATLAB}, utilizing the YALMIP optimization toolbox. As a case study, we consider the quantum version of the Penny Flip Game (PFG) introduced in \cite{wu2005hamiltotianformalismgametheory}. Its classical version is defined by the payoff matrix
\(
A = \mqty[1 & -1 \\ -1 & 1],
\)
with an associated payoff operator
\(
P = \mqty[1 & 0 \\ 0 & -1].
\)
This payoff operator can be incorporated into \eqref{eq:Hqm} to derive either the quantum representation of the game or recover the classical form, as detailed in \cite{wu2005hamiltotianformalismgametheory}. However, the full quantum construction results in a $16 \times 16$ matrix, which is computationally expensive to simulate.  

To make the problem tractable while focusing on validating our framework, we instead consider a smaller subgame. Specifically, we restrict the players’ available actions to the Pauli matrices $\{I,X\}$, which correspond to the classical \textit{“do not flip”} and \textit{“flip”} pure moves. The Hilbert space dimensions are set to $n_A=n_B=2$, so the joint Hilbert space is 
\[
\mathcal{H}_A \otimes \mathcal{H}_B, \quad \dim = N = n_An_B = 4.
\]

Following the procedure described in Section~\ref{subsec:quantum_game}, the resulting payoff Hamiltonian $H \in \mathbb{C}^{N \times N}$ corresponds to a $4 \times 4$ quantum subgame of the PFG \cite{wu2005hamiltotianformalismgametheory}. The reason we select this game is due to its simplicity which allows for explainability in our simulations results, as will be further explored in the corresponding subsection. For numerical stability, the Hamiltonian is scaled by a factor of $100$:
\begin{equation*}
H_{\textrm{pure}} = 
\begin{bmatrix}
 100 & 0    & 0    & 100 \\
   0 & -100 & -100 & 0 \\
   0 & -100 & -100 & 0 \\
 100 & 0    & 0    & 100
\end{bmatrix}.
\end{equation*}

Note that in this restricted quantum subgame spanned by $\{I,X\}$, the Hamiltonian includes non-zero diagonal elements. This property differentiates it from its purely classical counterpart:
\begin{equation*} 
H_{\textrm{diagonal}} =
\begin{bmatrix}
 100 & 0    & 0    & 0 \\
   0 & -100 & 0    & 0 \\
   0 & 0    & -100 & 0 \\
   0 & 0    & 0    & 100
\end{bmatrix}.
\end{equation*}
Note that this $H_{\textrm{diagonal}}$ lacks the off-diagonal elements that would interact with the quantum strategies $\mathcal{S}$, rendering the game equivalent to a classical one expressed in our quantum framework.

\begin{table*}[!t]
\centering
\caption{$D$ matrices with respect to $\Delta$ (rounded to 1 decimal place) for each of the 3 quantum payoff Hamiltonians.}
\renewcommand{\arraystretch}{1.15}
\setlength{\tabcolsep}{2pt}
\tiny
\resizebox{\textwidth}{!}{%
\begin{tabular}{c*{3}{c}}\toprule
 $\Delta$ & $20.0$ & $40.0$ & $60.0$ \\ \midrule
Pure & \MatrixCell{0.0+0.0j & -0.0-0.0j & -0.0+1.6j & -3.7-18.0j \\ -0.0+0.0j & -0.0+0.0j & 3.6-18.0j & -0.1+1.6j \\ -0.0-1.6j & 3.6+18.0j & 0.0+0.0j & 0.0-0.0j \\ -3.7+18.0j & -0.1-1.6j & 0.0+0.0j & -0.0+0.0j} & \MatrixCell{0.0+0.0j & -0.0+0.0j & -0.3-22.3j & -7.1+16.3j \\ -0.0-0.0j & 0.0+0.0j & 7.1+16.3j & -0.1-22.3j \\ -0.3+22.3j & 7.1-16.3j & 0.0+0.0j & -0.0+0.0j \\ -7.1-16.3j & -0.1+22.3j & -0.0-0.0j & -0.0+0.0j} & \MatrixCell{0.0+0.0j & 0.0+0.0j & 1.1+35.6j & -14.6+19.6j \\ 0.0-0.0j & 0.0+0.0j & 14.6+19.6j & -1.1+35.6j \\ 1.1-35.6j & 14.6-19.6j & -0.0+0.0j & -0.0-0.0j \\ -14.6-19.6j & -1.1-35.6j & -0.0+0.0j & -0.0+0.0j} \\ \hline
Diagonal & \MatrixCell{19.5+0.0j & 0.0-0.5j & -0.0-0.0j & -0.0+0.0j \\ 0.0+0.5j & 19.5+0.0j & 0.0-0.0j & -0.0-0.0j \\ -0.0+0.0j & 0.0+0.0j & -20.0+0.0j & 0.0+0.0j \\ -0.0-0.0j & -0.0+0.0j & 0.0-0.0j & -20.0+0.0j} & \MatrixCell{39.6+0.0j & -0.4+0.1j & 0.0-0.0j & -0.0+0.0j \\ -0.4-0.1j & 39.6+0.0j & -0.0+0.0j & 0.0-0.0j \\ 0.0+0.0j & -0.0-0.0j & -39.8+0.0j & 0.2-0.0j \\ -0.0-0.0j & 0.0+0.0j & 0.2+0.0j & -39.8+0.0j} & \MatrixCell{-60.0+0.0j & 0.0-0.0j & -0.0-0.0j & 0.0-0.0j \\ 0.0+0.0j & -60.0+0.0j & -0.0+0.0j & -0.0-0.0j \\ -0.0+0.0j & -0.0-0.0j & 60.0+0.0j & -0.0+0.0j \\ 0.0+0.0j & -0.0+0.0j & -0.0-0.0j & 60.0+0.0j} \\ \hline
Quantum & \MatrixCell{0.0+0.0j & -0.0+0.0j & -0.0+20.0j & 0.0+0.0j \\ -0.0-0.0j & -0.0+0.0j & -0.0+0.0j & 0.0+20.0j \\ -0.0-20.0j & -0.0-0.0j & 0.0+0.0j & 0.0+0.0j \\ 0.0-0.0j & 0.0-20.0j & 0.0-0.0j & -0.0+0.0j} & \MatrixCell{0.0+0.0j & 0.0+0.0j & 0.0+40.0j & 0.0+0.0j \\ 0.0-0.0j & -0.0+0.0j & -0.0+0.0j & -0.0+40.0j \\ 0.0-40.0j & -0.0-0.0j & 0.0+0.0j & 0.0+0.0j \\ 0.0-0.0j & -0.0-40.0j & 0.0-0.0j & -0.0+0.0j} & \MatrixCell{0.0+0.0j & 0.0+0.0j & -0.0+60.0j & 0.0+0.0j \\ 0.0-0.0j & -0.0+0.0j & -0.0+0.0j & 0.0+60.0j \\ -0.0-60.0j & -0.0-0.0j & 0.0+0.0j & 0.0+0.0j \\ 0.0-0.0j & 0.0-60.0j & 0.0-0.0j & -0.0+0.0j} \\ \bottomrule
\end{tabular}%
}

\vspace{0.8em}

\tiny
\begin{tabular}{c*{3}{c}}
\toprule
 $\Delta$ & $80.0$ & $100.0$ \\ \midrule
Pure & \MatrixCell{-0.0+0.0j & 0.1+0.1j & 4.8-55.0j & -17.7-17.2j \\ 0.1-0.1j & 0.0+0.0j & 12.3-19.6j & 5.9-55.1j \\ 4.8+55.0j & 12.3+19.6j & -1.5+0.0j & 0.0+0.0j \\ -17.7+17.2j & 5.9+55.1j & 0.0-0.0j & 0.0+0.0j} & \MatrixCell{-0.0+0.0j & 0.0+0.0j & -6.6+53.8j & -45.7+1.5j \\ 0.0-0.0j & -0.0+0.0j & 45.7+1.5j & -0.5+54.2j \\ -6.6-53.8j & 45.7-1.5j & 0.0+0.0j & 0.0+0.0j \\ -45.7-1.5j & -0.5-54.2j & 0.0-0.0j & 0.0+0.0j} \\ \hline
Diagonal & \MatrixCell{80.0+0.0j & 0.0+0.0j & -0.0+0.0j & -0.0-0.0j \\ 0.0-0.0j & 80.0+0.0j & 0.0+0.0j & -0.0+0.0j \\ -0.0-0.0j & 0.0-0.0j & -80.0+0.0j & 0.0-0.0j \\ -0.0+0.0j & -0.0-0.0j & 0.0+0.0j & -80.0+0.0j} & \MatrixCell{-45.8+0.0j & 0.0+0.0j & 40.8+16.4j & 0.0+0.0j \\ 0.0-0.0j & 98.1+0.0j & -0.0-0.0j & 1.7+0.8j \\ 40.8-16.4j & -0.0+0.0j & -44.5+0.0j & -0.0+0.0j \\ 0.0-0.0j & 1.7-0.8j & -0.0-0.0j & -98.1+0.0j} \\ \hline
Quantum & \MatrixCell{0.0+0.0j & -0.0+0.0j & -0.0+80.0j & 0.0-0.0j \\ -0.0-0.0j & -0.0+0.0j & -0.0-0.0j & 0.0+80.0j \\ -0.0-80.0j & -0.0+0.0j & 0.0+0.0j & -0.0+0.0j \\ 0.0+0.0j & 0.0-80.0j & -0.0-0.0j & -0.0+0.0j} & \MatrixCell{0.0+0.0j & 0.0+0.0j & 0.0+100.0j & 0.0+0.0j \\ 0.0-0.0j & -0.0+0.0j & -0.0+0.0j & -0.0+100.0j \\ 0.0-100.0j & -0.0-0.0j & 0.0+0.0j & 0.0+0.0j \\ 0.0-0.0j & -0.0-100.0j & 0.0-0.0j & -0.0+0.0j} \\ \bottomrule
\end{tabular}%
\label{tab:D_matrices_vertical_4x4}
\end{table*}

The final game used in our simulations is the restricted quantum subgame spanned by $\{I,Z\}$, so the elements of the Pauli set used to generate this game can be thought of as the \textit{do not flip} $\{I\}$ and a quantum strategy with no classical analog $\{Z\}$. The latter has an effect of manipulating the phase of the quantum object used to describe the moves of each player.  In this formulation, the maximum achievable payoff is $400$, compared to $100$ in the earlier cases:
\begin{equation*} 
H_{\textrm{quantum}} =
\begin{bmatrix}
 100 & -100j   & -100j   & 100 \\
 100j & -100   & -100    & -100j \\
 100j & -100   & -100    & -100j \\
 100  & 100j   & 100j    & 100
\end{bmatrix}.
\end{equation*}

The optimization variables are the players’ density matrices $\rho_A, \in \mathcal{P}_A, \rho_B \in \mathcal{P}_B$, the dual density matrix $\Omega \in \mathcal{P}_A$, the equilibrium scalar value $v \in \mathbb{R}$, and the deception matrix $D \in \mathbb{H}^{4}$. The deception matrix is constrained by an $\ell_1$-norm budget $\|D\|_1 \leq \Delta$, as described in previous sections. All density matrices are positive semidefinite and normalized to have unit trace.  

The optimization objective is to minimize the deceiver’s expected payoff:
\[
\min \; \operatorname{tr}\!\big((\rho_A \otimes \rho_B)H\big),
\]
using the program defined in \eqref{main program}.

We considered deception budgets $\Delta \in \{0,20,\dots,100\}$. For each value of $\Delta$, the semidefinite program was solved using the global branch-and-bound solver \texttt{bmibnb}, with \texttt{sdpt3} as the lower-bounding SDP solver and \texttt{fmincon} as the upper-bounding local solver. Solver tolerances were fixed at $10^{-3}$ (absolute gap) and $10^{-2}$ (relative gap). The maximum number of iterations was set to $100$, with a runtime cap of $30$ minutes per instance.


The performance was evaluated by plotting both the deceiver’s payoff and the victim’s perceived payoff $u$ as functions of the deception budget $\Delta$, as shown in Figure~\ref{fig:graph}. Across all three games, the expected payoff for the minimizing player~A decreases approximately linearly (increasing in absolute value) as $\Delta$ grows. This trend is consistent with theoretical expectations and suggests that, even though the algorithm may not have reached full convergence, the computed solutions are either optimal but not yet verified by the upper solver, or at least very close to optimal.

The corresponding deception matrices $D$ for selected values of $\Delta$ are reported in Table~\ref{tab:D_matrices_vertical_4x4}, which further supports this interpretation. For example, in the ``diagonal'' PFG, the optimal deception expends the entire budget to shift payoff mass away from the outcome favorable to player~B and toward outcomes favorable to player~A. This structure is evident in the diagonal entries of $D$, with any small off-diagonal components being of lower order of magnitude. These residual entries have therfore negligible effect on payoffs and are indicative of near-optimality. In this setting, player~B is deceived into believing their maximum attainable payoff is~$0$ (Figure \ref{fig:sub-b}), while player~A’s actual payoff improves (Figure \ref{fig:sub-a}). Similar patterns can be observed in the other two games. The ``quantum'' PFG exhibits distinctive deception involving selected off-diagonal entries, while the ``pure'' PFG relies on a more intricate deception pattern, primarily along the anti-diagonal and certain off-diagonal elements. 

Two further observations are noteworthy. First, when $\Delta = 100$, the ``diagonal'' PFG produces a deception matrix $D$ that is no longer strictly diagonal but includes non-trivial off-diagonal components. Second, the ``pure'' PFG also yields deception matrices with imaginary off-diagonal structure, even though its original payoff Hamiltonian contains only real off-diagonal elements. 

These results highlight a fundamental distinction between quantum and classical deception. In quantum settings, the emergence of off-diagonal entries in the payoff operator signals convergence toward an inherently quantum solution. Such entries encode both magnitude and phase, introducing interference effects that reshape outcome probabilities \cite{Accardi20212}. This mechanism enables constructive and destructive interference, allowing superpositional strategies to exploit phase relationships in ways that classical diagonal payoffs cannot. Classical games, restricted to real diagonal entries, only permit amplitude adjustments and thus lack this additional degree of strategic control. Consequently, the complex off-diagonal structure provides a uniquely quantum mechanism for manipulating the game’s probability landscape. This distinction reflects the non-Bayesian nature of quantum probability, which admits phenomena beyond the expressive power of classical Bayesian frameworks \cite{accardi1981topics}.

\section{Conclusion}\label{sec:conclusion}

In this paper, we developed a novel extension of the Honey-X deception framework into the quantum game-theoretic domain. To the best of our knowledge, this is the first work to study deceptive quantum games of any kind. We introduced a model in which a deceiver perturbs the payoff Hamiltonian within a bounded budget, and the victim best responds to the deceptive game. We proved that, as in the classical setting, the equilibrium strategy of a naive victim coincides with that of a robust victim who anticipates possible deception, thereby simplifying the analysis of victim behavior. Exploiting this equivalence, we formulated the quantum deception problem as a bilinear semidefinite program, enabling the computation of optimal deceptive strategies. Our simulations on quantum extensions of the Penny Flip game illustrated how the enlarged quantum strategy space supports richer deceptive behaviors and can yield greater adversarial advantage compared to classical settings. Taken together, these results establish both a theoretical and computational foundation for the study of deception in quantum strategic interactions.

Several research directions follow naturally from our work. One avenue is to extend the framework to incorporate fully entangled states, which would allow for more expressive strategy spaces and potentially stronger forms of deception. Another is to explore alternative models of deception beyond bounded perturbations of the Hamiltonian. A particularly important direction is improving computational efficiency: while our reformulation allows for more off-the-self solvers to be used in it rather than the initial bilevel program, the classical-to-quantum lift still introduces a complexity of $O(n^4)$. Also, the solution domain involves inherently imaginary numbers that further amplify computational cost. Significant advances will therefore be needed to make the framework scalable for large practical games, possibly leveraging quantum algorithms themselves to accelerate optimization.

\balance
\bibliographystyle{IEEEtran}
\bibliography{references,refs,gould_ref}


\end{document}